\documentclass[a4paper,UKenglish,cleveref, autoref, thm-restate]{lipics-v2021}



\bibliographystyle{plainurl}

\usepackage{amsmath}
\usepackage{amssymb}
\usepackage{amsthm}
\usepackage{graphicx}
\usepackage{algorithmic}
\usepackage{mathrsfs}
\usepackage{braket}
\usepackage{xcolor}
\definecolor{black}{rgb}{0,0.6,0}
\usepackage{cite}
\usepackage{doi}
\usepackage{tikz}
\usetikzlibrary{quantikz2}

\newcommand{\Or}{\mathcal{O}}
\newcommand{\RR}{\mathbb{R}}

\newcommand{\tr}{\mathrm{tr}}
\newcommand{\dd}{\mathrm{d}}
\newcommand{\ZZ}{\mathbb{Z}}

\DeclarePairedDelimiter{\abs}{\vert}{\vert}
\DeclarePairedDelimiter{\norm}{\Vert}{\Vert}

\DeclareMathOperator\erf{erf}


\title{Stochastic Error Cancellation in Analog Quantum Simulation} 


\author{Yiyi Cai}{Institute for Quantum Information and Matter, California Institute of Technology, Pasadena, CA, USA \and Department of Electrical Engineering, California Institute of Technology, Pasadena, CA, USA}{yiyi@caltech.edu}{[https://orcid.org/0009-0003-4092-200X]}{}

\author{Yu Tong}{Institute for Quantum Information and Matter, California Institute of Technology, Pasadena, CA, USA}{yutong9410@gmail.com}{https://orcid.org/0000-0002-7555-9373}{}

\author{John Preskill}{Institute for Quantum Information and Matter, California Institute of Technology, Pasadena, CA, USA \and AWS Center for Quantum Computing, Pasadena, CA, USA}{preskill@caltech.edu}{https://orcid.org/0000-0002-2421-4762}{}

\authorrunning{Y. Cai, Y. Tong, and J. Preskill} 

\Copyright{Yiyi Cai, Yu Tong, and John Preskill} 

\begin{CCSXML}
<ccs2012>
<concept>
<concept_id>10002950.10003714.10003727.10003728</concept_id>
<concept_desc>Mathematics of computing~Ordinary differential equations</concept_desc>
<concept_significance>500</concept_significance>
</concept>
<concept>
<concept_id>10010583.10010786.10010813.10011726</concept_id>
<concept_desc>Hardware~Quantum computation</concept_desc>
<concept_significance>500</concept_significance>
</concept>
<concept>
<concept_id>10002950.10003648</concept_id>
<concept_desc>Mathematics of computing~Probability and statistics</concept_desc>
<concept_significance>300</concept_significance>
</concept>
</ccs2012>
\end{CCSXML}

\ccsdesc[500]{Mathematics of computing~Ordinary differential equations}
\ccsdesc[500]{Hardware~Quantum computation}
\ccsdesc[300]{Mathematics of computing~Probability and statistics}

\keywords{Analog quantum simulation, error cancellation, concentration of measure} 

\category{} 

\relatedversion{} 



\acknowledgements{We thank Minh Tran, Matthias Caro, Mehdi Soleimanifar, Hsin-Yuan Huang, Andreas Elben, ChunJun Cao, and Christopher Pattison for helpful discussions. 
We thank Manuel Endres for pointing us to Ref.~\cite{shaw2023benchmarking}. Y.C. acknowledges funding from Arthur R. Adams Summer Undergraduate Research Fellowship. 
Y.T. acknowledges funding from U.S. Department of Energy Office of Science, Office of Advanced Scientific Computing Research (DE-SC0020290), and from U.S. Department of Energy, Office of Science, Basic Energy Sciences, under Award Number DE-SC0019374. J.P. acknowledges support from the U.S. Department of Energy Office of Science, Office of Advanced Scientific Computing Research (DE-NA0003525, DE-SC0020290), the U.S. Department of Energy, Office of Science, National Quantum Information Science Research Centers, Quantum Systems Accelerator, and the National Science Foundation (PHY-1733907).
The Institute for Quantum Information and Matter is an NSF Physics Frontiers Center.}

\nolinenumbers 

\EventEditors{Fr\'{e}d\'{e}ric Magniez and Alex Bredariol Grilo}
\EventNoEds{2}
\EventLongTitle{19th Conference on the Theory of Quantum Computation, Communication and Cryptography (TQC 2024)}
\EventShortTitle{TQC 2024}
\EventAcronym{TQC}
\EventYear{2024}
\EventDate{September 9--13, 2024}
\EventLocation{Okinawa, Japan}
\EventLogo{}
\SeriesVolume{310}
\ArticleNo{2}

\begin{document}

\maketitle

\begin{abstract}
Analog quantum simulation is a promising path towards solving classically intractable problems in many-body physics on near-term quantum devices. 
However, the presence of noise limits the size of the system and the length of time that can be simulated. 
In our work, we consider an error model in which the actual Hamiltonian of the simulator differs from the target Hamiltonian we want to simulate by small local perturbations, which are assumed to be random and unbiased. 
We analyze the error accumulated in observables in this setting and show that, due to stochastic error cancellation, with high probability the error scales as the square root of the number of qubits instead of linearly.  
We explore the concentration phenomenon of this error as well as its implications for local observables in the thermodynamic limit. Moreover, we show that stochastic error cancellation also manifests in the fidelity between the target state at the end of time-evolution and the actual state we obtain in the presence of noise. This indicates that, to reach a certain fidelity, more noise can be tolerated than implied by the worst-case bound if the noise comes from many statistically independent sources.
\end{abstract}

\section{Introduction}
\label{sec:intro}
Quantum computers are expected to outperform classical computers at solving certain problems of interest in physics, chemistry, and materials science. Simulating the dynamics of many-body quantum systems is an especially hard problem for classical computers, making quantum dynamics a particularly promising arena for seeking quantum advantage. Eventually, scalable fault-tolerant quantum computers will be able to perform accurate simulations of quantum dynamics, but these robust large-scale quantum machines are not likely to be available for many years. Meanwhile, what are the prospects for reaching quantum advantage using near-term quantum simulators that are not error-corrected?

Circuit-based quantum algorithms for quantum simulation offer great flexibility and can be error-corrected, but with current quantum technology analog quantum simulators may offer substantial advantage in the system size and time that can be achieved in simulation \cite{preskill2018qcandbeyond, daley2022qsimulationadv, cirac2012goals}. Analog quantum processors have tunable Hamiltonians running on quantum platforms, but need not have universal local control to perform informative simulations \cite{cirac2012goals, franca2020limitations, depalma2023limitations}. However, because these devices are not error-corrected, it is especially important to understand how errors accumulate during analog simulations of quantum dynamics.

Recently Trivedi et al.~used the Lieb-Robinson bound to show that the errors in expectation values of local observables can be independent of system size for short time evolution \cite{trivedi2022quantum}. They used an error model in which the actual Hamiltonian realized in the device differs from the desired target Hamiltonian by small local perturbations. 
More precisely, they considered a geometrically local Hamiltonian on a $d$-dimensional lattice with $N$ sites (each occupied by a qubit), and assumed that the actual Hamiltonian $H'$ and the target Hamiltonian $H$ are related through
\begin{equation}
\label{eq:hamiltonian_relation_cirac_paper}
    H' = H + \delta \sum_{i=1}^M  V_i.
\end{equation}
Here each $V_i$ is a local term with $\norm{V_i} \leq 1$, $M=\Or(N)$ denotes the number of independent error terms, and $\delta$ is a small number characterizing the magnitude of the local perturbations.  
One of their main conclusions is that the error in the expectation value of a local observable at time $t$ is at most $\Or(t^{d+1}\delta)$, where $t^{d+1}$ is essentially the volume of the local observable's Lieb-Robinson past light cone, and is independent of the system size. For a general observable that is not necessarily local, or for $t$ large enough so that information has the time to reach every part of the system, the error is at most $\Or(Nt\delta)$ as expected from first-order perturbation theory.

This result can be seen as a worst-case bound, which applies even if the small local perturbations are chosen adversarially to produce the largest possible error. However, this worst-case choice is unlikely to occur in practice. For estimating the accumulated error that should be anticipated under realistic conditions, it is often beneficial to consider a probabilistic error model rather than an adversarial one.  To be concrete, we consider the error model
\begin{equation}
\label{eq:hamiltonian_relation_this_project}
    H' = H + \sum_{i=1}^M g_i V_i.
\end{equation}
where in contrast to \eqref{eq:hamiltonian_relation_cirac_paper}, we assume that the local perturbations are stochastic and statistically independent, e.g., each $g_i$ is an independent Gaussian random variable with mean $0$ and standard deviation $\delta$. Instead of the worst case, we may now consider the accumulation of error in the average case. 
That is, we envision sampling $H'$ from an ensemble of possible Hamiltonians that might be realized in the device, estimating the error that is typical for this ensemble.
In other scenarios, for example in the analysis of the Trotter error in digital quantum simulations \cite{chen2021concentration,zhao2022hamiltonian}, the average-case error is found to be much better than the worst case, and the same can be expected in this analog setting. 
As an example, Fig.~\ref{fig:error_cancellation_demo} shows the difference between the worst-case and average-case error accumulation for time evolution in a one-dimensional Heisenberg spin system perturbed by a site-dependent magnetic field.

\begin{figure}
    \centering
    \includegraphics[scale=0.63]{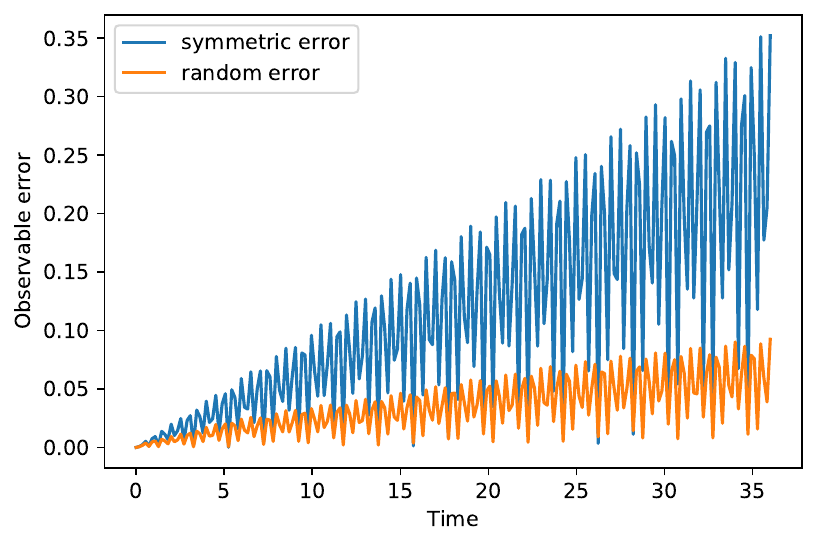}
    \caption{Numerical study of observable error of a 18-qubit system time-evolving under a Heisenberg XXX Hamiltonian $H=-J\sum_{i}\vec{\sigma}_i\cdot \vec{\sigma}_{i+1} + \sum_i (h+ g_i) X_i$. Local observable $Y_2$ is measured on the second qubit at time $t$,  with the system initial state being $\ket{0}^{\otimes N}$, and the local perturbations are $V_i=X_i$. The coupling strength $J$ between qubits is set to be $(0.2)\pi$, and $h=2\pi$. We compare the case of $g_i=\delta = 0.01$ (symmetric error), with a Gaussian model of random local perturbations, where $g_i \sim \mathcal{N}(0, \delta^{2})$ (random error). The rapid oscillations of the error as a function of time are discussed in Appendix \ref{app:oscillation}.}
    \label{fig:error_cancellation_demo}
\end{figure}

Simple classical reasoning provides an intuitive understanding of this finding. The cumulative effect of $M$ error sources, each contributing a Gaussian error with standard deviation $\delta$ and mean $0$, produces a total Gaussian error with mean $0$ and standard deviation $\sqrt{M}\delta$. For $M\gg 1$, this $\Or(\sqrt{M})$ cumulative error, resulting from stochastic error cancellations, is significantly suppressed compared to the $\Or(M)$ cumulative error which would occur in the absence of such cancellations. Because of the stochastic cancellations, we can tolerate more hardware error (larger $\delta$) than the worst-case error bound suggests.

In this paper, we explore the role of such error cancellations in analog quantum simulators and show that with high probability an error bound with square-root dependence on the system size $N$ can be achieved for general observables, in contrast to the linear $N$-dependence for the worst-case error bound. That is, the error bound is improved from $\Or(Nt\delta)$ to $\Or(\sqrt{N}t\delta)$. From this result, we derive an improved bound for local observables in the thermodynamic limit as well. Using the Lieb-Robinson bound, we show that the average-case error of local observables in the thermodynamic limit is bounded above by $\Or(t^{d/2+1}\delta)$ as opposed to the $\Or(t^{d+1}\delta)$ bound on the worst-case error. For fidelity, we show that the fidelity decays as $\exp(-\Or(\sqrt{N}\delta t))$ for small $\delta$ as $N$ increases, as observed in \cite{shaw2023benchmarking}, and is therefore slower than the exponential decay one would expect from the worst-case bound.

We are only aware of a few works besides \cite{trivedi2022quantum} that analyze the error in analog simulation. In \cite{poggi2020sensitivity} the error is analyzed by averaging over Haar random states, while in \cite{sarovar2017reliability} the leading-order error in a Gibbs state is expressed in terms of Fisher information. In contrast, we study how errors accumulate during time evolution of a quantum state.
The 
$\exp(-\Or(\sqrt{N}\delta t))$ decay of fidelity is observed in \cite{shaw2023benchmarking} both experimentally and numerically, though in a setting different from ours. In \cite{shaw2023benchmarking}, the dominant noise comes from the variation of the Rabi frequency, which is described by a single random variable. Using additional randomness introduced through the eigenstate thermalization hypothesis (ETH) \cite{DAlessioEtAl2016quantum} and neglecting oscillatory contributions, the authors are able to provide an explanation of the non-exponential decay. In our setting, we do not assume ETH or neglect any error term, but model the noise as coming from multiple statistically independent sources, and obtain a similar non-exponential decay of fidelity.

\section{Main results}
\label{sec:main_results}
Following \eqref{eq:hamiltonian_relation_this_project},
we consider a generalized setup of the local perturbation model, where each $g_i$ in \eqref{eq:hamiltonian_relation_this_project} is a $\chi$-deformed Gaussian defined as follows:

\begin{definition}[$\chi$-deformed Gaussian random variable]
    \label{defn:k-deformed_Gaussian}
    A random variable $g$ is a $\chi$-deformed Gaussian random variable if there exists $\theta \sim \mathcal{N}(0, 1)$ such that $g=\chi(\theta)$, and $\chi:\RR\to\RR$ is a strictly monotonic increasing differentiable function  satisfying 
    \begin{equation}
        \begin{aligned}
       & \abs{\dd \chi(\theta)/\dd \theta} \leq \delta, \\
       & \chi(0) = 0, \\
       & \chi(+\infty) = \Gamma,\ \chi(-\infty) = -\Gamma, \\
       & \mathbb{E}[\chi(\theta)] = 0.
        \end{aligned}
    \label{eq:k_requirements}
    \end{equation}
\end{definition}
Such a random variable $g$ has the nice properties that $|g|\leq \Gamma$ with probability $1$, and $|g|\leq \delta|\theta|$. We allow choosing $\Gamma=+\infty$.
This definition helps us generalize beyond the Gaussian noise model. Notably, we have the following examples that can be obtained as $\chi$-deformed Gaussian: (1) the uniform distribution $g_i \sim \mathcal{U}([-\delta', \delta'])$ where $\delta = \sqrt{2/\pi}\delta'$,  $\chi(\theta) = \delta'\erf(\theta/\sqrt{2})$, and $\Gamma=\delta'$; (2) the truncated Gaussian distribution for which $g_i$ obeys the Gaussian distribution $\mathcal{N}(0,\delta'^2)$ conditional on $|g_i|\leq \Gamma$, where $\delta = \delta'$ (one can in fact choose $\delta$ to be slightly smaller than $\delta'$), and $\chi(\theta)=\text{erf}^{-1} (\text{erf}(\frac{\theta}{\sqrt{2}})\text{erf}(\frac{\gamma}{\sqrt{2}\delta'})) \sqrt{2}\delta'$.

Denoting $H$ as a target Hamiltonian to be simulated and $H'$ as the actual Hamiltonian implemented, we show in Section \ref{sec:average_err} and \ref{sec:concentration} that 
\begin{theorem}
    \label{thm:bound_gen_obs}
    On a lattice consisting of $N$-sites, for Hamiltonians $H$ and $H'$ related through \eqref{eq:hamiltonian_relation_this_project} ($M=\Or(N)$), with each $g_i$ being an independent $\chi$-deformed Gaussian with $|\dd \chi(\theta)/\dd \theta|\leq\delta$,  $\Gamma\in(0,+\infty]$ (as defined in Definition~\ref{defn:k-deformed_Gaussian}), and
    $\sqrt{N}t\delta \leq \Or(1)$, we have
    \begin{equation}
        \big|\tr[\rho e^{iH't}Oe^{-iH't}] - \tr[\rho e^{iHt}Oe^{-iHt}]\big| \leq \Or(a\sqrt{N}t\delta\norm{O}) + \Or(Nt^2\delta^2\norm{O})
    \end{equation}
    with probability $1 - 2e^{-ca^2}$, for arbitrary $a>0$ and some absolute constant $c > 0$.
\end{theorem}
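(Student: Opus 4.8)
The plan is to split the error into a fluctuation part and a bias part,
\[
f(\theta) - f_0 = \big(f(\theta) - \mathbb{E}[f]\big) + \big(\mathbb{E}[f] - f_0\big),
\]
where $f(\theta) = \tr[\rho e^{iH't}Oe^{-iH't}]$ is viewed as a function of the underlying standard Gaussians $\theta = (\theta_1,\dots,\theta_M)$ through $g_i = \chi(\theta_i)$, and $f_0 = \tr[\rho e^{iHt}Oe^{-iHt}]$ is the noiseless value. The first (random) term will be controlled by concentration and will produce the $\Or(a\sqrt{N}t\delta\norm{O})$ contribution together with the probability $1 - 2e^{-ca^2}$; the second (deterministic) term will produce the $\Or(Nt^2\delta^2\norm{O})$ contribution.

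For the fluctuation part, first I would show $f$ is globally Lipschitz in $\theta$. Duhamel's formula gives the exact expression $\partial_{g_i} f = i\int_0^t \tr\big(\rho\,[\,e^{iH's}V_i e^{-iH's},\, e^{iH't}Oe^{-iH't}\,]\big)\,\dd s$, so that $|\partial_{g_i} f| \leq \Or(t\norm{O})$ uniformly in all the $g_j$ (the bound only uses $\norm{V_i}\leq 1$ and is insensitive to the Hamiltonian). Since $|\dd\chi/\dd\theta| \leq \delta$, the chain rule yields $|\partial_{\theta_i} f| = |\chi'(\theta_i)|\,|\partial_{g_i}f| \leq \Or(\delta t\norm{O})$, hence the Euclidean gradient norm satisfies $\norm{\nabla_\theta f} \leq \Or(\delta\sqrt{M}\,t\norm{O}) = \Or(\delta\sqrt{N}\,t\norm{O})=:L$. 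Gaussian concentration of Lipschitz functions (with $\theta\sim\mathcal{N}(0,I_M)$) then gives $\Pr[|f - \mathbb{E}[f]| \geq aL] \leq 2e^{-a^2/2}$, which is exactly the first term with $c = 1/2$.

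For the bias part, the naive estimate $|\mathbb{E}[f] - f_0| \leq \Or(t^2\norm{O})\,\mathbb{E}[\norm{V}^2]$ is fatal: since $\norm{\sum_i g_i V_i}$ can be as large as $\Or(\delta N)$, it yields $\Or(N^2 t^2\delta^2\norm{O})$, losing a factor of $N$. The remedy, and the heart of the argument, is to turn on the perturbations one coordinate at a time and extract the mean-zero cancellation coordinatewise. Writing $g^{(i)} = (g_1,\dots,g_i,0,\dots,0)$ and $\Phi(x) = \tr[\rho e^{i(H + \sum_l x_l V_l)t}Oe^{-i(H + \sum_l x_l V_l)t}]$, I telescope $\Phi(g) - \Phi(0) = \sum_{i=1}^{M}[\Phi(g^{(i)}) - \Phi(g^{(i-1)})]$ and Taylor-expand each increment to second order in the single variable $x_i$:
\[
\Phi(g^{(i)}) - \Phi(g^{(i-1)}) = g_i\,\partial_{x_i}\Phi(g^{(i-1)}) + \int_0^{g_i}(g_i - x)\,\partial_{x_i}^2\Phi(g^{(i-1)} + x e_i)\,\dd x.
\]
The crucial observation is that $g^{(i-1)}$ depends only on $g_1,\dots,g_{i-1}$ and is therefore independent of $g_i$; since $\mathbb{E}[g_i] = 0$, the first-order term vanishes in expectation \emph{exactly}, $\mathbb{E}[g_i\,\partial_{x_i}\Phi(g^{(i-1)})] = \mathbb{E}[g_i]\,\mathbb{E}[\partial_{x_i}\Phi(g^{(i-1)})] = 0$. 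Because the remainder moves along a single coordinate direction, it involves only the one-index second derivative $\partial_{x_i}^2\Phi$, which a double Duhamel estimate bounds by $\Or(t^2\norm{O})$ uniformly (it uses $\norm{V_i}\leq 1$ only, never a sum over sites). Using $\mathbb{E}[g_i^2] \leq \delta^2$ (which follows from $|g_i|\leq \delta|\theta_i|$), each increment contributes at most $\Or(\delta^2 t^2\norm{O})$, and summing over the $M = \Or(N)$ coordinates gives $|\mathbb{E}[f] - f_0| \leq \Or(Nt^2\delta^2\norm{O})$.

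Combining the two parts by the triangle inequality gives the claim, the probability coming entirely from the concentration step. The main obstacle is precisely the bias bound: a global Taylor expansion, or any operator-norm estimate of the second-order term, reintroduces $\norm{V}^2 \sim \delta^2 N^2$, so the linear-in-$N$ scaling genuinely relies on isolating the mean-zero cancellation separately in each coordinate, which is what the coordinatewise telescoping accomplishes. The hypothesis $\sqrt{N}t\delta \leq \Or(1)$ is used only to keep the analysis in the perturbative regime where the bias term is subleading relative to the fluctuation term.
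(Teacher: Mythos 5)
Your proposal is correct, and its overall decomposition (deterministic bias plus random fluctuation) and the entire concentration half coincide with the paper's proof: the paper also views $\tr[\rho O'(t)]$ as a function of the underlying Gaussians, bounds $|\partial_{g_i}|\leq 2t\norm{O}$ by Duhamel, gets the Lipschitz constant $2\norm{O}\sqrt{M}t\delta$ via the chain rule and the gradient norm, and invokes Gaussian concentration for Lipschitz functions. Where you genuinely diverge is the bias bound $|\mathbb{E}[f]-f_0|=\Or(N t^2\delta^2\norm{O})$. The paper works in the interaction picture, expands $\mathcal{T}e^{-i\int_0^t\sum_i g_iV_i(s)\dd s}$ as a Dyson series, bounds the nested commutators by $2^k\norm{O}$, and uses the sign structure of $\mathbb{E}[g_{i_1}\cdots g_{i_k}]$ to resum the series into the moment generating function $\prod_i\mathbb{E}[e^{2tg_i}]-1$, which it then estimates using $\mathbb{E}[g_i]=0$ and $|g_i|\leq\delta|\theta_i|$; this is where the hypothesis $\sqrt{N}t\delta\leq\Or(1)$ enters, to keep the product $(1+2\delta^2t^2+\Or(M^{-3/2}))^M-1$ of order $M\delta^2t^2$. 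Your Lindeberg-style coordinatewise telescoping with a one-variable Taylor remainder achieves the same bound more elementarily: the exact vanishing of $\mathbb{E}[g_i\,\partial_{x_i}\Phi(g^{(i-1)})]$ by independence, plus the uniform $\Or(t^2\norm{O})$ bound on $\partial_{x_i}^2\Phi$ from a double Duhamel estimate, needs only the first two moments of $g_i$, sidesteps any convergence or expectation-series interchange issues for the infinite Dyson sum, and in fact yields the bias bound unconditionally (your closing remark is right that the hypothesis $\sqrt{N}t\delta\leq\Or(1)$ is not really needed for it, whereas the paper's MGF route does use it). What the paper's approach buys in exchange is explicit leading constants (the $2M\delta^2t^2$ term) and machinery that is reused essentially verbatim for the fidelity decay bound later in the paper; your identification of why a global second-order Taylor expansion fails (the Lagrange point $\xi$ depends on $g$, destroying the independence that kills the cross terms) is also accurate and is exactly the obstruction the coordinatewise scheme removes.
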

Note that Theorem~\ref{thm:bound_gen_obs} holds even when $\Gamma=\infty$. We assume $\sqrt{N}t\delta \leq \Or(1)$, which gives the time scale in which the simulation provides meaningful results. This result is stronger than the $\Or (Nt\delta \norm{O})$ scaling one would get without error cancellation, i.e. $g_{i} = \delta.$ This indicates that for given system size $N$ and time $t$, we can tolerate higher local perturbations up to $\frac{1}{\sqrt{N}t}$ instead of $\frac{1}{Nt}.$ 

Additionally, one may be interested in the thermodynamic limit ($N \rightarrow \infty$) as opposed to a finite system \cite{aharonov2022hamiltonian} and explore quantum simulation tasks that are stable against extensive errors. More precisely, for a local observable $O$ that is supported only on a constant number of sites, and a geometrically local Hamiltonian $H$, we want the error bound to be independent of the size of the system. With such an error bound, computing the expectation value of local observables in time evolution falls into the category of ``stable quantum simulation tasks'' as defined in \cite[Prop.~4]{trivedi2022quantum}.

An system-size independent error bound implies that the hardware error ($\delta$) does not need to be scaled down with system size, which is highly desirable for analog simulators. Specifically, we consider a bound for local observables acting on $\Or(1)$ adjacent sites in a quantum system on a lattice $\mathbb{Z}_{L}^{d}$, where $d$ is the lattice dimension and $L$ is the number of sites in each direction. Combining Theorem~\ref{thm:bound_gen_obs} with the Lieb-Robinson bound \cite{lieb1972velocity}, we show in Section~\ref{sec:local_observables} that the stability of the quantum task can be stated:

\begin{theorem}
    \label{thm:bound_local_obs}
    We consider a geometrically local Hamiltonian $H$ on a $d$-dimensional lattice $\ZZ_L^d$ with $L$ sites in each direction, and a local observable $O$ supported on $\Or(1)$ sites.
    The Hamiltonian can be written as $H=\sum_{\alpha\in \ZZ_L^d}H_{\alpha}$, where $\|H_{\alpha}\|=\Or(1)$ and $H_{\alpha}$ acts non-trivially only on sites that are within distance $r_0$ from $\alpha$, with $r_0=\Or(1)$.
    For a $H'$ related to $H$ through \eqref{eq:hamiltonian_relation_this_project} ($M=\Or(N)$), with each $g_i$ being an independent $\chi$-deformed Gaussian with $|\dd \chi(\theta)/\dd \theta|\leq\delta$,  $\Gamma=\Or(1)$ (as defined in Definition~\ref{defn:k-deformed_Gaussian}), $t^{d/2+1}\delta \leq \Or(1)$, and each site being acted on by only $\Or(1)$ of the error terms $V_i$,
    we have
    \begin{equation}
        \big|\tr[\rho e^{iH't}Oe^{-iH't}] - \tr[\rho e^{iHt}Oe^{-iHt}]\big| = \Or \left(at^{\frac{d}{2}+1}\delta\norm{O}\right) + \Or \left(at\delta \log^{d/2}(\delta^{-1})\norm{O}\right)
    \end{equation}
    with probability $1-2e^{-ca^{2}}$, for any $a>0$ and some absolute constant $c > 0$.
\end{theorem}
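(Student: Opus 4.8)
The plan is to combine Theorem~\ref{thm:bound_gen_obs} with the Lieb-Robinson bound, reducing the local-observable statement to the general one but with the system size $N$ replaced by the effective volume of the light cone of $O$. Since $O$ is supported on $\Or(1)$ sites, after Heisenberg evolution for time $t$ its support is, up to exponentially small tails, confined to a ball $B_R$ of radius $R$ about $\mathrm{supp}(O)$. Error terms $V_i$ (and background terms $H_\alpha$) supported outside $B_R$ should therefore affect $\tr[\rho\, e^{iH't}Oe^{-iH't}]$ only negligibly, so the stochastic-cancellation analysis of Theorem~\ref{thm:bound_gen_obs} only needs to see the $\Or(R^d)$ error terms inside $B_R$. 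Choosing $R$ to balance the truncation error against the error-cancellation bound is what produces the two terms in the statement.

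Concretely, I would first fix a radius $R$ and introduce the truncated Hamiltonians $H_R := \sum_{\alpha:\,\mathrm{supp}(H_\alpha)\subseteq B_R}H_\alpha$ and $H_R' := H_R + \sum_{i:\,\mathrm{supp}(V_i)\subseteq B_R} g_i V_i$, which live on the $N_{\mathrm{eff}}=\Or(R^d)$ sites of $B_R$. Writing $\Delta$ for the quantity to be bounded and $\tilde\Delta := \tr[\rho\, e^{iH_R't}Oe^{-iH_R't}] - \tr[\rho\, e^{iH_Rt}Oe^{-iH_Rt}]$ for its truncated analogue, Duhamel's formula expresses $\Delta - \tilde\Delta$ as a time integral of commutators of $O$ (Heisenberg-evolved under the truncated dynamics) with the discarded terms $H - H_R$ and $\sum_{i:\,\mathrm{supp}(V_i)\not\subseteq B_R} g_iV_i$, all supported at distance $\gtrsim R$ from $\mathrm{supp}(O)$. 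The Lieb-Robinson bound then gives $\norm{[O(s),W]}\leq C\norm{O}\norm{W}e^{-\mu(\mathrm{dist}-vs)}$ for each such term $W$; using $\abs{g_i}\le\Gamma=\Or(1)$, $\norm{V_i}\le1$, $\norm{H_\alpha}=\Or(1)$, the $\Or(1)$-per-site hypotheses, and summing the $\Or(\rho^{d-1})$ terms at each distance $\rho>R$, I obtain a deterministic (probability-$1$) bound $\abs{\Delta-\tilde\Delta}\leq \Or(\norm{O} R^{d-1}e^{-\mu(R-vt)})$.

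Next I would choose $R = vt + \frac{C'}{\mu}\log(\delta^{-1})$ so that this truncation error is dominated by the target bound, and apply Theorem~\ref{thm:bound_gen_obs} to the auxiliary $B_R$-system, whose effective size is $N_{\mathrm{eff}}=\Or(R^d)$ and in which each $g_i$ remains an independent $\chi$-deformed Gaussian. This yields, with probability $1-2e^{-ca^2}$, the bound $\abs{\tilde\Delta}\leq \Or(a\sqrt{N_{\mathrm{eff}}}\,t\delta\norm{O}) + \Or(N_{\mathrm{eff}}t^2\delta^2\norm{O})$. Substituting $\sqrt{N_{\mathrm{eff}}}=\Or(R^{d/2})$ and using $(vt+\log(\delta^{-1}))^{d/2}\lesssim t^{d/2}+\log^{d/2}(\delta^{-1})$ splits the leading term into $\Or(at^{d/2+1}\delta\norm{O})$ and $\Or(at\delta\log^{d/2}(\delta^{-1})\norm{O})$, exactly the two terms claimed; the second-order term $\Or(R^d t^2\delta^2\norm{O})$ is absorbed into these using $t^{d/2+1}\delta\le\Or(1)$ (and noting the bound is vacuous, since $\abs{\Delta}\le 2\norm{O}$ always, whenever these factors fail to be $\Or(1)$).

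I expect the main obstacle to be the truncation step: making the Lieb-Robinson argument rigorous requires care in (i) setting up the Duhamel expansion for the \emph{difference of two perturbed evolutions} rather than a single Heisenberg operator, so that the discarded background and stochastic terms are handled simultaneously; (ii) controlling the geometric sum over error terms at distance $\rho>R$, which relies on the $\Or(1)$-per-site hypothesis to guarantee only $\Or(\rho^{d-1})$ terms per shell; and (iii) verifying that the chosen $R$ keeps the applicability condition $\sqrt{N_{\mathrm{eff}}}\,t\delta\le\Or(1)$ of Theorem~\ref{thm:bound_gen_obs} valid throughout the regime where the final bound is non-trivial. The error-cancellation content is inherited from Theorem~\ref{thm:bound_gen_obs}; the genuinely new work is the geometric Lieb-Robinson localization that converts its $\sqrt{N}$ into $\sqrt{R^d}\sim t^{d/2}$.
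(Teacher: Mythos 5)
Your proposal is correct and follows essentially the same route as the paper: truncate $H$ and $H'$ to the Lieb--Robinson light cone of $O$ (an effective system of $\Or(R^d)$ sites, using $\Gamma=\Or(1)$ and the $\Or(1)$-errors-per-site hypothesis), apply Theorem~\ref{thm:bound_gen_obs} there, and choose $R=\Or(t+\log(\delta^{-1}))$ so that $(R^{d/2}) t\delta$ splits into the two claimed terms. The only differences are cosmetic: you rederive the truncation estimate via Duhamel where the paper directly cites Lemma~\ref{lem:lieb_robinson}, and you make explicit two details the paper leaves implicit (absorbing the $\Or(R^d t^2\delta^2\norm{O})$ term and checking the applicability condition $\sqrt{N_{\mathrm{eff}}}\,t\delta\leq\Or(1)$).
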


This is a stronger bound than the previously established one without error cancellation with leading term of $\Or \left(t^{d+1}\delta\right)$ \cite{trivedi2022quantum}. Note that in the above theorem we require that $\Gamma=\Or(1)$, as opposed to $\Gamma\in (0,\infty]$ in Theorem~\ref{thm:bound_gen_obs}. This is to ensure that the Lieb-Robinson bound can be applied to the Hamiltonian $H'$. $\Gamma=\Or(1)$ is physically justifiable because in realistic systems we do not expect to encounter an error that can be arbitrarily large.
For the fidelity decay, we have the following theorem:
\begin{theorem}
    \label{thm:fidelity_decay}
    On a lattice consisting of $N$-sites, for Hamiltonians $H$ and $H'$ related through \eqref{eq:hamiltonian_relation_this_project} ($M=\Or(N)$), with each $g_i$ being an independent $\chi$-deformed Gaussian with $|\dd \chi(\theta)/\dd \theta|\leq\delta$,  $\Gamma\in(0,+\infty]$ (as defined in Definition~\ref{defn:k-deformed_Gaussian}), and
    $a\sqrt{N}t\delta \leq \Delta$, where $\Delta$ is a constant that is independent of $a,N,t$, the fidelity
    \[
    F = |\braket{\phi(t)|\phi'(t)}|^2,
    \]
    where $\ket{\phi(t)}=e^{-iHt}\ket{\phi_0}$, $\ket{\phi'(t)}=e^{-iH't}\ket{\phi_0}$, for initial state $\ket{\phi_0}$,
    satisfies
    \begin{equation}
        F\geq e^{-{\Or}(a\sqrt{N} \delta t) - \Or(N\delta^{2}t^{2})}
    \end{equation}
    with probability $1 - 2e^{-ca^2}$, for arbitrary $a>0$ and some absolute constant $c > 0$.
\end{theorem}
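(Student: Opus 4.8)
The plan is to obtain this as a corollary of Theorem~\ref{thm:bound_gen_obs} by recognizing the fidelity as a single observable expectation value that directly compares the noisy and noiseless evolutions. Concretely, I would take the initial-state density matrix $\rho = \ket{\phi_0}\bra{\phi_0}$ and choose as the observable the projector onto the ideal time-evolved state, $O = \ket{\phi(t)}\bra{\phi(t)}$ with $\ket{\phi(t)} = e^{-iHt}\ket{\phi_0}$. This $O$ is a rank-one projector, so $\norm{O} = 1$, and crucially it depends only on $H$ and $\ket{\phi_0}$ and not on the random couplings $g_i$, so Theorem~\ref{thm:bound_gen_obs} applies to it verbatim.

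First I would verify the two expectation values. For the ideal evolution, $\tr[\rho e^{iHt} O e^{-iHt}] = \abs{\braket{\phi(t)|e^{-iHt}|\phi_0}}^2 = \abs{\braket{\phi_0|\phi_0}}^2 = 1$, since $e^{-iHt}$ maps $\ket{\phi_0}$ exactly onto $\ket{\phi(t)}$. For the noisy evolution, $\tr[\rho e^{iH't} O e^{-iH't}] = \abs{\braket{\phi(t)|e^{-iH't}|\phi_0}}^2 = \abs{\braket{\phi(t)|\phi'(t)}}^2 = F$. Hence the left-hand side of Theorem~\ref{thm:bound_gen_obs} is exactly $\abs{1 - F} = 1 - F$, using $0 \le F \le 1$.

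Applying Theorem~\ref{thm:bound_gen_obs} with this $\rho$ and $O$ then yields, with probability $1 - 2e^{-ca^2}$,
\[
1 - F \leq \Or(a\sqrt{N}t\delta) + \Or(Nt^2\delta^2) =: \epsilon .
\]
The final step is to convert this additive bound into the multiplicative form. Since $1 - \epsilon \geq e^{-2\epsilon}$ whenever $0 \leq \epsilon \leq 1/2$ (because $-\log(1-\epsilon) = \sum_{k\ge 1}\epsilon^k/k \le \epsilon/(1-\epsilon) \le 2\epsilon$ on this range), I would conclude $F \geq e^{-2\epsilon} = e^{-\Or(a\sqrt{N}\delta t) - \Or(N\delta^2 t^2)}$.

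The only real point to check — and the sole place the hypothesis $a\sqrt{N}t\delta \leq \Delta$ is used — is keeping $\epsilon \le 1/2$ so that the elementary inequality above is valid. Here I would first note that the probability guarantee $1-2e^{-ca^2}$ is vacuous (non-positive) unless $a$ exceeds the absolute constant $a_0 = \sqrt{(\ln 2)/c}$, so it suffices to treat $a > a_0$. In that regime $\sqrt{N}t\delta \le (a\sqrt{N}t\delta)/a_0 \le \Delta/a_0$, which controls the second-order term $Nt^2\delta^2 = (\sqrt{N}t\delta)^2 \le (\Delta/a_0)^2$; together with $a\sqrt{N}t\delta \le \Delta$ this bounds $\epsilon \le C_1\Delta + C_2(\Delta/a_0)^2$, which is at most $1/2$ once the absolute constant $\Delta$ is chosen small enough (this is precisely why the theorem fixes $\Delta$ to be independent of $a,N,t$). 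I expect no further obstacle, since the genuine work — the concentration argument producing the $\sqrt{N}$ scaling — has already been carried out in Theorem~\ref{thm:bound_gen_obs}, and the fidelity statement reduces to the clean choice of observable together with this $\log$-convexity conversion.
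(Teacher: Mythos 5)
Your proposal is correct, but it takes a genuinely different route from the paper. The paper does not invoke Theorem~\ref{thm:bound_gen_obs} at all: it works directly with the complex amplitude $\braket{\phi(t)|\phi'(t)} = \bra{\phi_0}\mathcal{T}e^{-i\int_0^t\sum_i g_iV_i(s)\dd s}\ket{\phi_0}$, redoing the two ingredients from scratch --- a Dyson expansion in the interaction picture to show $|\mathbb{E}[\braket{\phi(t)|\phi'(t)}]-1| = \Or(M\delta^2t^2)$, and a fresh Lipschitz estimate ($C_{\mathrm{Lip}}=\sqrt{M}t\delta$) feeding Lemma~\ref{lem:concentration_inequality} for concentration of the overlap --- before combining and squaring to get $F \geq e^{-2a\sqrt{M}\delta t - \Or(N\delta^2t^2)}$. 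You instead reduce fidelity to a single application of Theorem~\ref{thm:bound_gen_obs} by taking $\rho=\ket{\phi_0}\bra{\phi_0}$ and $O=\ket{\phi(t)}\bra{\phi(t)}$, which is legitimate: this $O$ is Hermitian, has unit norm, and is deterministic (independent of the $g_i$), and your identification $\tr[\rho O(t)]=1$, $\tr[\rho O'(t)]=F$ is exactly right, so the theorem applies verbatim once you restrict to the non-vacuous regime $a > a_0 = \sqrt{(\ln 2)/c}$ to recover $\sqrt{N}t\delta \leq \Or(1)$ --- a point you handle carefully and the paper leaves implicit. What your approach buys: it is shorter, reuses proven machinery, and works with the real-valued quantity $F$ throughout, thereby sidestepping the fact that Lemma~\ref{lem:concentration_inequality} is stated for real-valued functions while the paper applies it to the complex-valued overlap (strictly, the paper should split into real and imaginary parts). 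What the paper's route buys: it controls the amplitude deviation $|\braket{\phi(t)|\phi'(t)}-1|$ itself, which is first order in the perturbation and of independent interest, and yields marginally better constants (the fidelity observable, run through the commutator expansion behind Theorem~\ref{thm:bound_gen_obs}, picks up extra factors of $2$). Both proofs share the same final caveat --- the exponentiation step $1-\epsilon \geq e^{-\Or(\epsilon)}$ needs $\epsilon$ bounded away from $1$, i.e.\ $\Delta$ sufficiently small or constants depending on $\Delta$ --- and you flag this explicitly where the paper writes only ``for $a\sqrt{M}\delta t = \Or(1)$.''
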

We can see that up to leading order in $\delta$, the fidelity decays exponentially in $\sqrt{N}$ rather than $N$, thus showing a non-exponential decay of fidelity. From this we can see that in order to make the fidelity bounded away from $0$ by a constant, it suffices to have $\delta=\Or(1/(\sqrt{N}t))$, rather than $\Or(1/(Nt))$ that one would have with a worst-case bound. We will prove this theorem in Section~\ref{sec:non_exp_fidelity_decay}.

\section{The average error from random noise}
\label{sec:average_err}
We first consider the average observable error accumulated during time evolution and bound 
\begin{equation}
    \big|\mathbb{E}_{\{g_i\}}[\tr[\rho O'(t)]] - \tr[\rho O(t)] \big|
\end{equation}
with the notation
\begin{equation}
\label{eq:defn_time_evolved_operators_and_states}
    O(t) = e^{iH t} O e^{-iH t},\quad O'(t) = e^{i H't} O e^{-iH't}, \quad
    \rho(t) = e^{-iH t} \rho e^{iH t},\quad \rho'(t) = e^{-iH' t} \rho e^{iH' t}.
\end{equation}
We use the evolution under the target Hamiltonian $H$ as a reference frame, and consider the local perturbation in the interaction picture: 
\begin{equation}
    e^{-iH't} = e^{-iH t}\mathcal{T}e^{-i\int_0^t \sum_i g_i V_i(s)\dd s}
\end{equation}
where $V_i(s) = e^{iH s}V_i e^{-iH s}$ and $\mathcal{T}$ denotes time ordering.

We assume that $\delta\leq \Or(1/(\sqrt{N}t))$ in the analysis below. Because $M=\Or(N)$, we also have $\delta\leq \Or(1/(\sqrt{M}t))$.
We use the Dyson expansion to analyze the accumulation of error: 
\begin{equation}
    \begin{aligned}
    &\mathbb{E}[\tr[\rho O'(t)]] - \tr[\rho O(t)] \\
    &= \sum_{k=1}^{\infty} i^k \int_0^t\dd t_1\int_0^{t_1}\dd t_2\cdots \int_0^{t_{k-1}}\dd t_k \sum_{i_1,\cdots,i_k} \mathbb{E}[g_{i_1}\cdots g_{i_k}] \underbrace{\tr\big[\rho(t) [V_{i_1}(t_1),[\cdots [V_{i_k}(t_k),O]\cdots]\big]}_{C_{i_1i_2\cdots i_k}^{(k)}}.
    \end{aligned}
\end{equation}
 With $\|\rho(t)\|_{\tr}\leq 1$,\footnote{Here $\|\cdot\|_{\tr}$ denotes the trace norm.} 
Note that $\mathbb{E}[g_{i_1}\cdots g_{i_k}]$ is either $0$ (when $g_{i}$'s do not appear in pairs) or positive (when $g_i$'s appear in pairs), and therefore to upper bound the above quantity in absolute value we only need to upper bound $\big|C_{i_1i_2\cdots i_k}^{(k)}\big |$.
Because $||[A,B]|| \leq \norm{AB} + \norm{BA} \leq 2 ||AB||,$  
\begin{equation}
    \big|C_{i_1i_2\cdots i_k}^{(k)}\big |\leq ||\rho(t)||_{\tr}|| [V_{i_1}(t_1),[\cdots [V_{i_k}(t_k),O]\cdots]]|| \leq 2^{k} \norm{O}.
\end{equation}
Therefore
\begin{equation}
\label{eq:bounding_expect_err_with_moment_generating_function}
    \begin{aligned}
         \big|\mathbb{E}[\tr[\rho O'(t)]] - \tr[\rho O(t)] \big| &\leq \sum_{k=1}^{\infty} \frac{t^k}{k!} \mathbb{E}\left[\left(\sum_i g_{i}\right)^k\right] 2^k \|O\|\\
         &= \mathbb{E}[e^{2t\sum_{i=1}^{M} g_{i}}-1] \norm{O} \\
         &= \left(\prod_{i=1}^{M} \mathbb{E}[e^{2tg_{i}}]-1\right)\norm{O}.
    \end{aligned}
\end{equation}
Without loss of generality we assume that $\|O\|\leq 1$ hereafter.
From the above bound we can see that we only need to focus on bounding $\mathbb{E}[e^{2tg_{i}}]-1$ for each $i$.
Using the fact that $\mathbb{E}[g_i]=\mathbb{E}[\chi(\theta_{i})] = 0$ from \eqref{eq:k_requirements} and $|g_i| \leq \delta|\theta_{i}|$, we have
\begin{equation}
\label{eq:bounding_gi_exp_with_thetai}
    \begin{aligned}
        \mathbb{E}[e^{2tg_{i}}] &= \sum_{k=0}^{\infty} \frac{(2t)^{k}}{k!}\mathbb{E}[g_{i}^{k}] \leq 1+\sum_{k =2}^{\infty} \frac{(2\delta t)^{k}}{k!}\mathbb{E}[\abs{\theta_{i}}^{k}] = \mathbb{E}[e^{2\delta t\abs{\theta_{i}}}] - 2\delta t\mathbb{E}[\abs{\theta_{i}}].
    \end{aligned}
\end{equation}
Using Taylor's theorem in the Lagrange form, with the fact that
\begin{equation}
    \frac{\dd^k}{\dd a^k}\mathbb{E}[e^{a\abs{\theta_{i}}}] = \mathbb{E}[e^{a\abs{\theta_{i}}}\abs{\theta_{i}}^k],
\end{equation}
we have
\begin{equation}
    \mathbb{E}[e^{2\delta t\abs{\theta_{i}}}] - 2\delta t\mathbb{E}[\abs{\theta_{i}}] \leq 1+2\delta^2 t^2 \mathbb{E}[e^{2\delta t\abs{\theta_{i}}}\abs{\theta_{i}}^2].
\end{equation}
Because $\theta_i\sim\mathcal{N}(0,1)$, for any $a\geq 0$,
\begin{equation}
    \mathbb{E}[e^{a\abs{\theta_{i}}}\abs{\theta_{i}}^2] = (4a^2 e^{a^2})(1+\erf{a}) + 4\sqrt{\frac{2}{\pi}}ae^{-a^2/2} - \sqrt{\frac{2}{\pi}}a e^{a^2/2} = 1+\Or(a),
\end{equation}
we have (using $\delta\leq \Or(1/(\sqrt{M}t))$)
\begin{equation}
    \mathbb{E}[e^{2\delta t\abs{\theta_{i}}}] - 2\delta t\mathbb{E}[\abs{\theta_{i}}] \leq 1 + 2\delta^2 t^2 + \Or(\delta^3 t^3) = 1 + 2\delta^2 t^2 + \Or(M^{-3/2}).
\end{equation}
By 
\eqref{eq:bounding_expect_err_with_moment_generating_function}, \eqref{eq:bounding_gi_exp_with_thetai}, and $\|O\|\leq 1$ we then have
\begin{equation}
     \big|\mathbb{E}[\tr[\rho O'(t)]] - \tr[\rho O(t)] \big| \leq \left(1 + 2\delta^2 t^2 + \Or(M^{-3/2})\right)^M -1 = 2M\delta^2 t^2 + \Or(M^{-1/2}). 
\end{equation}

The above derivation leads us to the following theorem:
\begin{theorem}[Average error bound]

    On a lattice consisting of $N$-sites, for Hamiltonians $H$ and $H'$ related through \eqref{eq:hamiltonian_relation_this_project} ($M=\Or(N)$), with each $g_i$ being an independent $\chi$-deformed Gaussian with $|\dd \chi(\theta)/\dd \theta|\leq\delta$,  $\Gamma\in(0,+\infty]$, and
    $\sqrt{N}t\delta \leq \Or(1)$, we have
    \label{thm:avg_error_bound}
    \begin{equation}
        \big|\mathbb{E}_{\{g_i\}}[\tr[\rho O'(t)]] - \tr[\rho O(t)] \big| = \Or(N\delta^{2}t^{2} \norm{O})
    \end{equation}
\end{theorem}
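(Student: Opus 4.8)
The plan is to work in the interaction picture relative to the target evolution, writing $e^{-iH't}=e^{-iHt}\,\mathcal{T}e^{-i\int_0^t\sum_i g_iV_i(s)\,\dd s}$ with $V_i(s)=e^{iHs}V_ie^{-iHs}$, and then expand the bias $\mathbb{E}[\tr[\rho O'(t)]]-\tr[\rho O(t)]$ as a Dyson series. The order-$k$ term is an iterated time integral of a nested commutator $[V_{i_1}(t_1),[\cdots,[V_{i_k}(t_k),O]\cdots]]$ traced against $\rho(t)$ and weighted by the mixed moment $\mathbb{E}[g_{i_1}\cdots g_{i_k}]$. First I would isolate what makes the \emph{bias} (as opposed to the fluctuations treated later by concentration) small.

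The key structural observation is that, by independence and mean-zero of the $g_i$, each mixed moment $\mathbb{E}[g_{i_1}\cdots g_{i_k}]$ either vanishes (whenever some index appears an odd number of times) or is nonnegative (when the indices pair up). Consequently, bounding the difference in absolute value costs nothing at the level of these coefficients: I can replace each nested commutator by its operator-norm bound $\|[V_{i_1}(t_1),[\cdots,[V_{i_k}(t_k),O]\cdots]]\|\le 2^k\norm{O}$ and use $\|\rho(t)\|_{\tr}\le 1$, so that $\big|\sum\mathbb{E}[g_{i_1}\cdots g_{i_k}]C^{(k)}\big|\le\sum\mathbb{E}[g_{i_1}\cdots g_{i_k}]\,2^k\norm{O}$. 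Summing the time integrals and using $\sum_{i_1\cdots i_k}\mathbb{E}[g_{i_1}\cdots g_{i_k}]=\mathbb{E}[(\sum_i g_i)^k]$ together with independence collapses the bound to $(\prod_{i=1}^M\mathbb{E}[e^{2tg_i}]-1)\norm{O}$, reducing the problem to controlling the moment generating function of each $g_i$.

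The heart of the argument is then the per-factor estimate $\mathbb{E}[e^{2tg_i}]\le 1+\Or(\delta^2t^2)$. This is where the mean-zero hypothesis does the real work: naively $\mathbb{E}[e^{2tg_i}]=1+2t\,\mathbb{E}[g_i]+\cdots$, and it is precisely the vanishing of the linear term $\mathbb{E}[g_i]=0$ that removes the $\Or(\delta t)$ contribution and forces the leading correction to be quadratic. To make this rigorous even when $\Gamma=\infty$ (so $g_i$ has unbounded support), I would exploit the domination $|g_i|\le\delta|\theta_i|$ with $\theta_i\sim\mathcal{N}(0,1)$ to compare against the Gaussian moment generating function $\mathbb{E}[e^{2\delta t|\theta_i|}]$, which is finite for all $t$; subtracting the linear term and applying Taylor's theorem in Lagrange form gives $\mathbb{E}[e^{2tg_i}]\le 1+2\delta^2t^2\,\mathbb{E}[e^{2\delta t|\theta_i|}|\theta_i|^2]$. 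A direct Gaussian computation shows $\mathbb{E}[e^{a|\theta_i|}|\theta_i|^2]=1+\Or(a)$, so under $\delta t\le\Or(1/\sqrt{M})$ each factor is at most $1+2\delta^2t^2+\Or(M^{-3/2})$.

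Finally I would take the product over the $M=\Or(N)$ factors, obtaining $(1+2\delta^2t^2+\Or(M^{-3/2}))^M-1=2M\delta^2t^2+\Or(M^{-1/2})=\Or(N\delta^2t^2)$, which with the restored $\norm{O}$ yields the claimed $\Or(N\delta^2t^2\norm{O})$. The main obstacle I anticipate is the interplay between the unbounded-support case $\Gamma=\infty$ and the need for a convergent moment generating function: one must verify that domination by a genuine Gaussian is enough to justify interchanging expectation with the infinite Dyson/Taylor series and that the tail contributions are genuinely lower order in the $\delta t\le\Or(1/\sqrt{M})$ regime. Everything else is careful but routine bookkeeping of this scaling.
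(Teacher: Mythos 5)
Your proposal is correct and follows essentially the same route as the paper's own proof: interaction-picture Dyson expansion, the observation that the mixed moments $\mathbb{E}[g_{i_1}\cdots g_{i_k}]$ vanish or are nonnegative, the commutator bound $2^k\norm{O}$ collapsing everything to $\prod_i\mathbb{E}[e^{2tg_i}]-1$, and the per-factor estimate via $|g_i|\le\delta|\theta_i|$, mean-zero cancellation of the linear term, and Taylor's theorem in Lagrange form. The paper executes exactly these steps, including the final product expansion $(1+2\delta^2t^2+\Or(M^{-3/2}))^M-1=2M\delta^2t^2+\Or(M^{-1/2})$.
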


This error bound shows that, if we average over multiple instances of the noise, then for the simulation to yield meaningful result up to time $t$ for a system with size $N$, we need local perturbation to be $\delta=\Or(1/(\sqrt{N}t))$, whereas the naive error bound of $\Or(Nt\delta)$ would only guarantee a meaningful result only when $\delta=\Or(1/(Nt))$. Therefore we can significantly extend the time and system size of the simulation that can be performed with guarantee at the same level of noise. 

\section{Concentration of the error}
\label{sec:concentration}
In the above section we focused on the expected error, but can the error be significantly larger than its expectation value? This is a question about the concentration of the probability measure, and our main tool is the following lemma: 

\begin{lemma}[Gaussian concentration inequality for Lipschitz functions]
\label{lem:concentration_inequality}
Let $f: \mathbb{R}^{M} \rightarrow \mathbb{R}$ be a function which is Lipschitz-continuous with constant $1$ (i.e. $|f(x)-f(y)| \leq |x-y|$ for all $x, y \in \mathbb{R}^{M}$), then for any $t$, 
\begin{equation}
    \mathbb{P}\left[|f(X) - \mathbb{E}[f(X)]| \geq t\right] \leq 2\exp(-ct^{2})
\end{equation}
for all $t > 0$ and some absolute constant $c > 0$, where $X \sim \mathcal{N}(0, 1)^{M}.$
\end{lemma}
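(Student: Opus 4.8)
The plan is to establish the two-sided tail bound by first proving a sub-Gaussian bound on the moment generating function (MGF) of the centered variable $Z := f(X) - \mathbb{E}[f(X)]$, and then applying a Chernoff argument. Concretely, I would aim to show that there is an absolute constant $\sigma$ with
\[
\mathbb{E}\!\left[e^{\lambda Z}\right] \leq e^{\sigma^2 \lambda^2/2} \qquad \text{for all } \lambda \in \mathbb{R}.
\]
Since the hypotheses are symmetric under $f \mapsto -f$, the same bound holds with $Z$ replaced by $-Z$, so it suffices to control one tail; the final factor of $2$ then comes from a union bound over the two tails.

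For the MGF bound I would use the Gaussian interpolation (smart-path) argument. Let $Y \sim \mathcal{N}(0,1)^M$ be an independent copy of $X$. By Jensen's inequality applied to the inner expectation over $Y$ we get $\mathbb{E}_X[e^{\lambda Z}] \leq \mathbb{E}_{X,Y}[e^{\lambda(f(X) - f(Y))}]$. I would then interpolate along the rotation $X_\theta := X\sin\theta + Y\cos\theta$ for $\theta \in [0,\pi/2]$, so that $X_0 = Y$, $X_{\pi/2} = X$, and the derivative $Y_\theta := \tfrac{d}{d\theta}X_\theta = X\cos\theta - Y\sin\theta$ is, for each fixed $\theta$, a standard Gaussian independent of $X_\theta$. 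Writing $f(X) - f(Y) = \int_0^{\pi/2} \nabla f(X_\theta)\cdot Y_\theta\, d\theta$ and applying Jensen once more to the uniform probability measure $\tfrac{2}{\pi}\,d\theta$ on $[0,\pi/2]$ converts the exponential of the integral into an average of exponentials that I can bound termwise.

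The key computation is then the Gaussian MGF of a linear form: conditioning on $X_\theta$ and integrating out $Y_\theta$ gives $\mathbb{E}_{Y_\theta}[e^{\frac{\pi\lambda}{2}\nabla f(X_\theta)\cdot Y_\theta}] = e^{\frac{\pi^2\lambda^2}{8}\abs{\nabla f(X_\theta)}^2}$, and the Lipschitz hypothesis yields $\abs{\nabla f} \leq 1$ almost everywhere, so this is at most $e^{\pi^2\lambda^2/8}$. Assembling the pieces gives $\mathbb{E}[e^{\lambda Z}] \leq e^{\pi^2\lambda^2/8}$, i.e. $\sigma^2 = \pi^2/4$. The Chernoff step $\mathbb{P}[Z \geq t] \leq e^{-\lambda t}\mathbb{E}[e^{\lambda Z}]$, optimized over $\lambda$ (taking $\lambda = 4t/\pi^2$), then yields $\mathbb{P}[Z \geq t] \leq e^{-2t^2/\pi^2}$, and doubling for the two tails gives the claim with $c = 2/\pi^2$.

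The main obstacle is that a merely Lipschitz $f$ need not be differentiable, so writing $\nabla f$ and using $f(X)-f(Y) = \int \nabla f(X_\theta)\cdot Y_\theta\,d\theta$ requires justification. I would resolve this by a standard mollification: convolve $f$ with a narrow Gaussian kernel to obtain a smooth $f_\epsilon$ that is still $1$-Lipschitz (convolution does not increase the Lipschitz constant, keeping $\abs{\nabla f_\epsilon}\leq 1$), prove the MGF bound for each $f_\epsilon$, and pass to the limit $\epsilon \to 0$ using $f_\epsilon \to f$ pointwise together with dominated convergence on the uniformly bounded MGFs. Alternatively, one can avoid differentiating $f$ altogether by invoking the Gaussian logarithmic Sobolev inequality and running the Herbst argument, turning the inequality into a differential inequality for $\log \mathbb{E}[e^{\lambda Z}]$ that integrates to the same sub-Gaussian bound; I regard the interpolation route as the more self-contained of the two.
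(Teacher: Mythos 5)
Your proof is correct. Note, however, that the paper does not prove this lemma at all: it is invoked as a known result, with the proof deferred to the cited references (Tao, \emph{Topics in Random Matrix Theory}, Theorem~2.1.12, and Bandeira's lecture notes). Your argument is in fact the standard Maurey--Pisier interpolation proof, which is essentially the one given in Tao's reference: the Jensen symmetrization $\mathbb{E}[e^{\lambda Z}]\leq \mathbb{E}[e^{\lambda(f(X)-f(Y))}]$, the circular path $X_\theta = X\sin\theta + Y\cos\theta$ whose derivative $Y_\theta$ is an independent standard Gaussian for each fixed $\theta$, a second application of Jensen over $\theta$, and the exact Gaussian MGF of the linear form $\nabla f(X_\theta)\cdot Y_\theta$ bounded via $|\nabla f|\leq 1$. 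All steps check out, including the Chernoff optimization giving the explicit constant $c = 2/\pi^2$ (the paper only needs some absolute $c>0$), and your mollification step correctly handles non-differentiable Lipschitz $f$; the domination $e^{\lambda(f_\epsilon(X)-f_\epsilon(Y))}\leq e^{|\lambda|\,|X-Y|}$, which is integrable under the Gaussian measure, justifies the limit $\epsilon\to 0$. So relative to the paper you have supplied genuine added content: a self-contained proof of the concentration inequality on which Theorems~\ref{thm:bound_gen_obs}, \ref{thm:bound_local_obs}, and \ref{thm:fidelity_decay} all rest, rather than a pointer to the literature.
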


The origin of this lemma is rather difficult to find, but its proof can be found at many places, including \cite[Theorem~2.1.12]{Tao2023TopicsRandomMatrixTheory} and \cite[Chapter~6, Theorem~2.1]{BandeiraLectureNotes}. It may appear at first glance that we might need a non-Gaussian version of this result, given that the noise we consider in Definition~\ref{defn:k-deformed_Gaussian} is not necessarily Gaussian. However, later we will see that a Gaussian version suffices because the noise can be regarded as a function of Gaussian random variables.

Recall that the expectation value $\tr[\rho O'(t)]$ is a function of the noise $\{g_i\}$, which is in turn a function of Gaussian random variables $\{\theta_i\}$ through $g_i = \chi(\theta_i)$. We therefore view $\tr[\rho O'(t)]$ as a function of $\{\theta_i\}$ which we denote by $h(\vec{\theta})$, where $\vec{\theta}=(\theta_1,\theta_2,\cdots,\theta_M)$. Similarly we denote $\vec{g}=(g_1,g_2,\cdots,g_M)$.
We will next proceed to obtain a Lipschitz constant for this function. Note that the Lipschitz constant can then be chosen to be the supremum of the 2-norm of the gradient, which we will justify below:
applying the mean value theorem (for several variables), for any pair of $\vec{\theta}$ and $\vec{\theta}'$ we have
\[
|h(\vec{\theta})-h(\vec{\theta}')| = |\nabla h(s\vec{\theta}+(1-s)\vec{\theta}') \cdot (\vec{\theta}-\vec{\theta}')|,
\]
where $\cdot$ denotes the Euclidean inner product (or the dot product). Therefore
\[
|h(\vec{\theta})-h(\vec{\theta}')| \leq \sup_{\vec{\theta}^*} |\nabla h(\vec{\theta}^*)| |\vec{\theta}-\vec{\theta}'|,
\]
where the norm $|\cdot |$ on the right-hand side denotes the vector 2-norm, and we have used the Cauchy-Schwarz inequality in arriving at this bound.
One can then choose the Lipschitz constant to be anything larger than or equal to $\sup_{\vec{\theta}^*} |\nabla h(\vec{\theta}^*)|$, i.e., any upper bound of $|\nabla h|$, which we will proceed to compute next. We will first bound individual partial derivatives

\begin{equation}
    \frac{\partial}{\partial\theta_i}h(\vec{\theta}) = \frac{\dd g_i}{\dd \theta_i}\partial_{g_i}\tr[Oe^{-iH'(\vec{g})t}\rho e^{iH'(\vec{g})t}],
\end{equation}
where we make explicit the $\vec{g}$-dependence in $H'$ defined in \eqref{eq:hamiltonian_relation_this_project}. Because $|\dd g_i/\dd \theta_i|\leq \delta$ by \eqref{eq:k_requirements}, we only need to bound $\partial_{g_i}\tr[Oe^{-iH'(\vec{g})t}\rho e^{iH'(\vec{g})t}]$:
\begin{equation}
\begin{aligned}
    |\partial_{g_{i}}\tr[O e^{iH'(\vec{g})t}\rho e^{-iH'(\vec{g})t}]| &\leq \norm{(\partial_{g_{i}}e^{iH'(\vec{g})t})O e^{-iH'(\vec{g})t}} + \norm{e^{iH'(\vec{g})t}O(\partial_{g_{i}} e^{-iH'(\vec{g})t})} \\
    &\leq 2\norm{O} \norm{\partial_{g_{i}} e^{-iH'(\vec{g})t}} 
    \leq 2\norm{O}t.
\end{aligned}
\end{equation}
In the last inequality above we used the fact that
\[
\|\partial_{g_{i}}e^{-iH'(\vec{g})t}\| = \|\int_{0}^t e^{-iH'(\vec{g})(t-s)}V_i e^{-iH'(\vec{g})s}\dd s\|\leq t,
\]
where we have used $\|V_i\|\le 1$.
We therefore have $|\partial_{\theta_i}h(\vec{\theta})|\leq 2\norm{O}t\delta$. As a result,
\begin{equation}
     |\nabla h|=\sqrt{\sum_{i=1}^{M} \abs{\partial_{\theta_{i}}h}^{2}} \leq 2\norm{O}\sqrt{M}t\delta.
\end{equation}
Because this holds for all choices of $\vec{\theta}$, the right-hand side is an upper bound of the supremum as well. Therefore we can choose the Lipschitz constant of $h$ to be $C_{\mathrm{Lip}}=2\norm{O}\sqrt{M}t\delta$.
$h(\vec{\theta})/C_{\mathrm{Lip}}$ then has Lipschitz constant $1$.
Through a direct application of Lemma \ref{lem:concentration_inequality}, we obtain that for some absolute constant $c > 0$ and any $a > 0$, 
\begin{equation}
\begin{aligned}
    \mathbb{P}[|h(\vec{\theta}) - \mathbb{E}[h(\vec{\theta})]| \geq a C_{\mathrm{Lip}}]  \leq 2e^{-ca^2}.
\end{aligned}
\end{equation}
We then have the following result, where we also use $M=\Or(N)$:
\begin{theorem}[Concentration bound of observable error]
\label{thm:concentration_bound}
    On a lattice consisting of $N$-sites, for Hamiltonians $H$ and $H'$ related through \eqref{eq:hamiltonian_relation_this_project} ($M=\Or(N)$), with each $g_i$ being an independent $\chi$-deformed Gaussian with $|\dd \chi(\theta)/\dd \theta|\leq\delta$,  $\Gamma\in(0,+\infty]$, and
    $\sqrt{N}t\delta \leq 1$, we have
    \begin{equation}
    \big|\tr[\rho O'(t)] - \mathbb{E}[\tr[\rho O'(t)]]\big| \leq 2a\norm{O}\sqrt{M}t\delta = \Or(a\norm{O}\sqrt{N}t\delta)
    \label{concentration_bound}
    \end{equation}
with probability $1 - 2e^{-ca^2}$, for arbitrary $a > 0$ and some absolute constant $c > 0$. 
\end{theorem}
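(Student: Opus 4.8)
The plan is to recognize that, although the noise variables $g_i$ are not themselves Gaussian, the quantity $\tr[\rho O'(t)]$ is a deterministic function of the underlying standard Gaussians $\theta_i$ through $g_i=\chi(\theta_i)$, so the Gaussian concentration inequality of Lemma~\ref{lem:concentration_inequality} applies directly once I produce a Lipschitz constant. Writing $h(\vec\theta)=\tr[\rho O'(t)]$, I would first note that since $h$ is differentiable, the mean value theorem together with Cauchy--Schwarz shows that any upper bound on $\sup_{\vec\theta}\abs{\nabla h(\vec\theta)}$ serves as a valid Lipschitz constant; thus the whole problem reduces to bounding the Euclidean norm of the gradient.

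To estimate $\abs{\nabla h}$ I would bound each partial derivative separately. The chain rule gives $\partial_{\theta_i}h=(\dd g_i/\dd\theta_i)\,\partial_{g_i}\tr[\rho O'(t)]$, and the first factor is at most $\delta$ directly from the defining property $\abs{\dd\chi(\theta)/\dd\theta}\le\delta$ of the $\chi$-deformed Gaussian (Definition~\ref{defn:k-deformed_Gaussian}). For the second factor, I would use that $H'$ depends linearly on $g_i$ with coefficient $V_i$, so Duhamel's formula yields $\partial_{g_i}e^{-iH't}=-i\int_0^t e^{-iH'(t-s)}V_i e^{-iH's}\,\dd s$; unitarity of the evolution together with $\norm{V_i}\le 1$ then bounds this operator in norm by $t$. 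Applying the product rule to the two exponentials flanking $O$ and using $\norm{\rho}_{\tr}\le 1$ gives $\abs{\partial_{g_i}\tr[\rho O'(t)]}\le 2\norm{O}t$, hence $\abs{\partial_{\theta_i}h}\le 2\norm{O}t\delta$ for every coordinate.

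The crucial point is the aggregation step: summing the squared partial derivatives over the $M=\Or(N)$ coordinates yields $\abs{\nabla h}\le 2\norm{O}\sqrt{M}t\delta$, and it is exactly this Euclidean ($\ell^2$) combination---rather than the $\ell^1$ sum that a worst-case argument would produce---that converts the naive $\Or(M)$ dependence into the $\Or(\sqrt{M})$ scaling. Setting $C_{\mathrm{Lip}}=2\norm{O}\sqrt{M}t\delta$, the rescaled function $h/C_{\mathrm{Lip}}$ is $1$-Lipschitz, so Lemma~\ref{lem:concentration_inequality} gives $\mathbb{P}[\abs{h-\mathbb{E}[h]}\ge aC_{\mathrm{Lip}}]\le 2e^{-ca^2}$, which is precisely the claimed bound after substituting $M=\Or(N)$.

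I expect the main obstacle to be the operator-derivative estimate $\norm{\partial_{g_i}e^{-iH't}}\le t$: one must set up Duhamel's formula correctly and exploit unitarity so that the integrand has norm at most $\norm{V_i}\le 1$, which is what delivers the linear-in-$t$ bound after integration. The conceptual subtlety worth flagging is that no non-Gaussian concentration inequality is needed---the non-Gaussianity is absorbed entirely into the per-coordinate factor $\delta=\sup\abs{\dd\chi/\dd\theta}$, after which the underlying randomness is genuinely standard Gaussian and Lemma~\ref{lem:concentration_inequality} can be invoked verbatim.
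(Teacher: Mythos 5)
Your proposal is correct and follows essentially the same route as the paper's own proof: viewing $\tr[\rho O'(t)]$ as a function of the underlying Gaussians $\theta_i$, bounding each partial derivative via the chain rule, the bound $\abs{\dd\chi/\dd\theta}\le\delta$, and the Duhamel estimate $\norm{\partial_{g_i}e^{-iH't}}\le t$, then aggregating in $\ell^2$ to get the Lipschitz constant $2\norm{O}\sqrt{M}t\delta$ and invoking Lemma~\ref{lem:concentration_inequality}. No gaps; the one subtlety you flag (that Gaussian concentration suffices because the non-Gaussianity is absorbed into the per-coordinate factor $\delta$) is exactly the observation the paper makes as well.
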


Combining Theorem~\ref{thm:avg_error_bound} and Theorem~\ref{thm:concentration_bound}, we arrive at the result stated in Theorem~\ref{thm:bound_gen_obs}.

\section{Local observables}
\label{sec:local_observables}
In this section, we will take locality into consideration to obtain an error bound for local observables that is independent of the system size.
Such an error bound is needed to make the simulation meaningful in the thermodynamic limit.
We restrict ourselves to spin systems with spatial locality, i.e. systems with Hamiltonians defined on a $d$-dimensional lattice with $N$ sites in total and $L$ sites in each direction, written as 
\begin{equation}
    \label{eq:k-local_ham}
    H = \sum_{\alpha \in \mathbb{Z}_{L}^{d}} H_{\alpha}
\end{equation}
where $\norm{H_{\alpha}} \leq \zeta$ and $H_{\alpha}$ only acts on spins within a distance $r_0$ from $\alpha$, and $r_0=\Or(1)$. A key tool we are going to use is the Lieb-Robinson bound:

\begin{lemma}[Lieb-Robinson Bound, Refs.~\cite{hastings2004lieb, bravyi2006lieb, trivedi2022quantum}]
    \label{lem:lieb_robinson}
    For any local operator $O$ with support $S_{O}$, and for any $R > 0$, there exist positive constants $u, v$ that depend only on the lattice such that
    \begin{equation}
        \norm{O(t) - O_{R}(t)} \leq \norm{O} \abs{S_{O}} e^{-\mu R}(e^{v\zeta t} - 1) 
    \end{equation}
    where $O_{R}(t) = e^{iH_{R}t}Oe^{-iH_{R}t}$ with $H_{R} = H - \sum_{\alpha | d(S_{H^{\alpha}}, S_{O}) \geq R} H_{\alpha}$ being the restriction of the Hamiltonian to a region within distance $R$ of $S_{O}.$
\end{lemma}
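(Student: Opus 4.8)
Since this is the standard Lieb-Robinson bound, the plan is to follow the two-step argument underlying the cited references \cite{hastings2004lieb, bravyi2006lieb}: first reduce the operator difference $O(t)-O_R(t)$ to a time integral of commutators between $O(s)$ and the ``far'' interaction terms, and then control each such commutator with a Lieb-Robinson commutator estimate. For the reduction, write $B = H - H_R = \sum_{\alpha\,:\,d(S_{H_\alpha},S_O)\geq R} H_\alpha$ for the part of the Hamiltonian acting only outside distance $R$ of $S_O$. I would interpolate between the two Heisenberg evolutions: with $O(s)=e^{iHs}Oe^{-iHs}$, differentiate $e^{iH_R(t-s)}O(s)e^{-iH_R(t-s)}$ in $s$. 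Since $H_R$ commutes with $e^{iH_R(t-s)}$, the two boundary-generator terms combine with $\tfrac{\dd}{\dd s}O(s)=i[H,O(s)]$ into a single commutator with $H-H_R=B$, giving the identity
\[
O(t)-O_R(t) = i\int_0^t e^{iH_R(t-s)}\,[B,\,O(s)]\,e^{-iH_R(t-s)}\,\dd s.
\]
By unitary invariance of the operator norm and the triangle inequality, this yields
\[
\norm{O(t)-O_R(t)} \leq \int_0^t \sum_{\alpha\,:\,d(S_{H_\alpha},S_O)\geq R}\norm{[H_\alpha,\,O(s)]}\,\dd s,
\]
so it remains to bound the commutator of $O(s)$ with each individual interaction term lying at distance at least $R$.

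The heart of the argument is the Lieb-Robinson commutator bound: for operators $A$ and $O$ one shows
\[
\norm{[A,\,O(s)]} \leq c\,\norm{A}\,\norm{O}\,\abs{S_O}\,e^{-\mu\,(d(S_A,S_O)-v\zeta s)}
\]
for lattice-dependent constants $c,\mu,v$. I would prove this by the standard recursion: set $f(s)=\norm{[A,O(s)]}$, differentiate, expand $H=\sum_\beta H_\beta$ using geometric locality, and iterate the resulting integral inequality into a Dyson-type series whose $k$-th term is a sum over length-$k$ sequences of adjacent interaction terms connecting $S_O$ to $S_A$. The decisive point, and the main obstacle, is the resummation of this series: using $\norm{H_\beta}\leq\zeta$, the finite range $r_0=\Or(1)$, and the polynomial growth of lattice balls, one must show the path sum converges and exponentiates to the claimed $e^{-\mu d}\,e^{\mu v\zeta s}$ form with a finite velocity $v$ (this is where the geometry of the lattice enters, and where the constants $\mu,v$ are fixed).

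Finally, I would combine the two pieces: substitute $A=H_\alpha$ (so $\norm{A}\leq\zeta$) into the commutator bound, use $d(S_{H_\alpha},S_O)\geq R$ to replace the distance by $R$ in the exponent, and carry out the spatial sum over $\alpha$, which converges after absorbing the polynomially many terms at each distance into a slightly reduced decay rate, together with the time integral $\int_0^t e^{\mu v\zeta s}\,\dd s$, which produces the factor proportional to $(e^{v\zeta t}-1)$ after renaming the constants $\mu,v$ and absorbing the prefactors. Collecting the resulting factors $\norm{O}$, $\abs{S_O}$, $e^{-\mu R}$, and $(e^{v\zeta t}-1)$ gives exactly the stated bound.
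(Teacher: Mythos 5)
The paper does not prove this lemma at all --- it imports it verbatim from the cited references --- and your proposal correctly reconstructs the standard argument underlying those references: the interpolation identity reducing $O(t)-O_R(t)$ to time-integrated commutators of $O(s)$ with the far terms $H_\alpha$, the exponential Lieb-Robinson commutator estimate obtained from the iterated Dyson/recursion series, and the spatial sum (absorbing polynomial volume factors into a slightly reduced decay rate) together with the time integral producing the $e^{-\mu R}(e^{v\zeta t}-1)$ form. Your outline is sound, and the one step you leave schematic --- the path-counting resummation that fixes the constants $\mu$ and $v$ --- is precisely the standard computation carried out in \cite{hastings2004lieb, bravyi2006lieb}, so nothing is missing relative to what the paper itself relies on.
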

We can then apply the Lieb-Robinson bound (Lemma \ref{lem:lieb_robinson}) to approximate the Heisenberg picture evolution of local observables with that corresponding to the Hamiltonian truncated within their light cones. Specifically, we consider the Heisenberg picture of observable $O$ under the truncated Hamiltonian $H_{R}$ and $H_{R}'$, denoted as:
\begin{equation}
    O_{R}(t) = e^{iH_{R}t}Oe^{-iH_{R}t}, \quad
    O_{R}'(t) = e^{iH_{R}'t}Oe^{-iH_{R}'t}
\end{equation}
where we denote $H_{R}$ as the truncated Hamiltonian acting non-trivially only on sites within distance $R$ from $S_O$, and $H_{R}'$ as the Hamiltonian obtained from $H'$ through the same procedure.
Assuming $\abs{S_{O}} \leq \Or(1)$, and with $\norm{H_{\alpha}}\leq \zeta$ and $e^{-\mu k} \geq 0$, we arrive at
\begin{equation}
\label{eq:O_diffs_bc_H_truncation}
    \|O'(t) - O_{R}'(t) \| \leq \Or (\norm{O}e^{-\mu R + v\zeta t}),\quad
    \|O(t) - O_{R}(t) \| \leq \Or (\norm{O}e^{-\mu R + v\zeta t})
\end{equation}

These bounds then allow us to upper bound the resulting errors in the expectation values through
\[
\big|\tr[\rho O'(t)] - \tr[\rho O'_{R}(t)]\big|\leq \|O'(t) - O'_{R}(t) \|,\quad \big|\tr[\rho O(t)] - \tr[\rho O_{R}(t)]\big|\leq \|O(t) - O_{R}(t) \|,
\]
which is true for any quantum state $\rho$.
This is a consequence of the duality between the Schatten $1$-norm (the trace distance) and $\infty$-norm (the spectral norm).

Note that the Lieb-Robinson bound only holds when the strength of local terms does not grow with the size of the system, and this is the reason why we choose $\Gamma=\Or(1)$ in Theorem~\ref{thm:bound_local_obs}, which ensures that all local terms in the Hamiltonian are bounded by a constant that is independent of the system size.

Since now $H_R$ and $H_R'$ acts non-trivially only on $\Or(R^{d})$, making this the effective system size. 
We also need to assume that the noise is spread evenly across the whole system, which can be rigorously stated as each site being acted on by only $\Or(1)$ of the error terms $V_i$. Therefore there are only $\Or(R^{d})$ terms $V_i$  that come into the difference between $H_R$ and $H_R'$.
Consequently we can apply Theorem~\ref{thm:bound_gen_obs} to get
\begin{equation}
\label{eq:O_diff_bc_noise}
    \begin{aligned}
        \big|\tr[\rho O_{R}'(t)]- \tr[\rho O_{R}(t)]\big| &\leq \Or(a\sqrt{R^{d}} t\delta \norm{O}) + \Or(R^{d}t^{2}\delta^{2}\norm{O})
    \end{aligned}
\end{equation}
with probability $1-2e^{-ca^{2}},$ for some absolute constant $c > 0$ and any $a>0.$ 
Combining the above bounds \eqref{eq:O_diffs_bc_H_truncation} and \eqref{eq:O_diff_bc_noise} together, we obtain 
\begin{equation}
    \big|\tr[\rho O'(t)]- \tr[\rho O(t)]\big| \leq \Or(aR^{d/2}t\delta \norm{O} + \norm{O}e^{-\mu R + v\zeta t})
\end{equation}
Note that if we choose $R = \frac{1}{\mu}(v\zeta t+\log (\delta^{-1}))$, then we get 
\begin{equation}
    \big|\tr[\rho O'(t)]- \tr[\rho O(t)]\big| \leq \Or\left(a\norm{O}\left(\frac{v\zeta}{\mu}t+\frac{1}{\mu}\log(\delta^{-1})\right)^{d/2}t\delta + \delta \norm{O}\right).  
\end{equation}
Using the fact that $(\alpha+\beta)^{d} \leq \Or(\alpha^{d}+\beta^{d})$ when $\alpha, \beta$ are constants, we arrive at 
\begin{equation}
    \big|\tr[\rho O'(t)] - \tr[\rho O(t)]\big| \leq \Or \left(at^{\frac{d}{2}+1}\delta\norm{O}\right) + \Or \left(a t\delta \log^{d/2}(\delta^{-1})\norm{O}\right) 
\end{equation}
with probability $1-2e^{-ca^{2}}.$ Theorem~\ref{thm:bound_local_obs} then follows.  

\section{Non-exponential fidelity decay}
\label{sec:non_exp_fidelity_decay}

The stochastic error cancellation can also be observed in the fidelity between the target state and the actual state we get at the end of time-evolution, leading to a surprising non-exponential decay of the fidelity for small $\delta$. This is similar to the non-exponential fidelity decay observed in \cite{shaw2023benchmarking}. 
We consider the fidelity metric as  
\begin{equation}
\label{eq:fidelity_defn}
    F = | \langle {\phi (t)}|{\phi'(t)} \rangle |^{2}
\end{equation}
where 
\begin{equation}
    \ket{\phi(t)} = e^{-iHt} \ket{\phi_{0}}, \quad \ket{\phi'(t)} = e^{-iH't} \ket{\phi_{0}}
\end{equation}
are pure states with $\ket{\phi(t)}$ denoting the time-evolved state of interest and $\ket{\phi'(t)}$ denoting the state under local perturbation. Here $\ket{\phi_{0}}$ is the initial state of the system. 
We will then prove Theorem~\ref{thm:fidelity_decay}.

\begin{proof}[Proof of Theorem~\ref{thm:fidelity_decay}]
    We are interested in upper-bounding 
\begin{equation}
    1-F = 1- | \langle {\phi (t)}|{\phi'(t)} \rangle |^{2}.
\end{equation}
We follow similar steps as previous proofs and first bound its expectation value:
\begin{equation}
    \begin{aligned}
        &\big| \mathbb{E}_{\{g_i\}}[ \langle {\phi (t)}|{\phi'(t)} \rangle] -1 \big| \\
        &= \big| \mathbb{E}[ \bra{\phi_{0}} e^{iHt} e^{-iH't} \ket{\phi_{0}}] - 1 \big| \\
        &= \big| \mathbb{E}[\bra{\phi_{0}} \mathcal{T}e^{-i\int_0^t \sum_i g_i V_i(s) \dd s} \ket{\phi_{0}}] - 1 \big|\\
        &= \Big| \sum_{k=1}^{\infty} (-i)^k \int_0^t\dd t_1\int_0^{t_1}\dd t_2\cdots \int_0^{t_{k-1}}\dd t_k \sum_{i_1,\cdots,i_k} \mathbb{E}[g_{i_1}\cdots g_{i_k}] \bra{\phi_{0}}V_{i_1}(t_1)V_{i_2}(t_2) \cdots V_{i_k} (t_k) \ket{\phi_0}\Big|
    \end{aligned}
\end{equation}
Here, $|\bra{\phi_{0}}V_{i_1}(t_1)V_{i_2}(t_2) \cdots V_{i_k} (t_k) \ket{\phi_0}| \leq 1$ since each local term $\|V_{i}\| \leq 1$. Note that 
\[
\sum_{i_1,\cdots,i_k} \mathbb{E} [g_{i_1}\cdots g_{i_k}] = \mathbb{E} \left[\sum_{i_1,\cdots,i_k} g_{i_1}\cdots g_{i_k} \right] = \mathbb{E} \left[\left(\sum_{i=1}^{M} g_{i}\right)^{k}\right] \geq 0.
\]
Therefore, 
\begin{equation}
\label{eq:expected_fidelity_bound}
\begin{aligned}
    \big| \mathbb{E}_{\{g_i\}}[ \langle {\phi (t)}|{\phi'(t)} \rangle] -1 \big|  &\leq \mathbb{E} \left[\sum_{k=1}^{\infty} \frac{t^{k}}{k!} \left(\sum_{i=1}^{M} g_{i}\right)^{k}\right] \\
    &= \mathbb{E} \left[e^{\sum_{i=1}^{M} tg_{i}}\right] - 1 \\
    &= \prod_{i=1}^{M} \mathbb{E}\left[ e^{tg_{i}}\right] - 1\\
    &\leq \prod_{i=1}^M(\mathbb{E}[e^{\delta t\abs{\theta_{i}}}] - \delta t\mathbb{E}[\abs{\theta_{i}}]) - 1 \\
    &\leq \left(1 + \delta^2 t^2 + \Or(M^{-3/2})\right)^M -1 \\
    &= \Or(M\delta^2 t^2) .
\end{aligned}
\end{equation}
We can now bound the concentration of the fidelity, i.e, by how much $\langle \phi (t)|\phi'(t)\rangle$ can deviate from its expectation value with large probability.
Following the previous setup, we treat $\langle \phi (t)|\phi'(t)\rangle$ as a function of $\{\theta_{i}\},$ denoted by $h(\Vec{\theta})$, where $\Vec{\theta} = (\theta_{1}, \theta_{2}, ... , \theta_{M}).$ We aim to find a Lipschitz constant for this function. We have 
\begin{equation}
    |\partial_{g_{i}} \bra{\phi_{0}} e^{iHt} e^{-iH'(g)t} \ket{\phi_{0}}|  \leq \left|\int_0^t\bra{\phi_{0}}e^{iHt}  e^{-iH'(g)(t-s)}(-i)V_ie^{-iH'(g)s}  \ket{\phi_{0}}\dd s \right| \leq t
\end{equation}
and $|\dd g_i/\dd \theta_i|\leq \delta$, giving us $|\partial_{\theta_i}h(\vec{\theta})|\leq t\delta.$ The Lipschitz constant can then be chosen as 
\begin{equation}
     \sqrt{\sum_{i=1}^{M} \abs{\partial_{\theta_{i}}h}^{2}} \leq \sqrt{M}t\delta = C_{\mathrm{Lip}}.
\end{equation}
$h(\vec{\theta})/C_{\mathrm{Lip}}$ then has Lipschitz constant $1$.
We then obtain the following bound from Lemma~\ref{lem:concentration_inequality}: 
\begin{equation}
\label{eq:fidelity_concentration_bound}
    \mathbb{P}[| \langle \phi (t)|\phi'(t)\rangle - \mathbb{E}[ \langle \phi (t)|\phi'(t)\rangle]| \geq a\sqrt{M}t\delta]  \leq 2e^{-ca^2}.
\end{equation}
for some absolute constant $c > 0$ and any $a > 0.$
Combining the two bounds \ref{eq:expected_fidelity_bound} and \eqref{eq:fidelity_concentration_bound} derived above, we arrive at 
\[| \langle \phi (t)|\phi'(t)\rangle - 1| \leq 
 a\sqrt{M}\delta t + \Or(N\delta^{2}t^{2})\] with probability $1-2e^{-ca^{2}}.$ This gives us 
 \begin{equation}
 \begin{aligned}
     |\langle \phi (t)|\phi'(t)\rangle|^{2} &\geq (1-a\sqrt{M}\delta t + \Or(N\delta^{2}t^{2} ))^{2} \\
     &= 1 - 2a\sqrt{M}\delta t + \Or(N\delta^{2}t^{2})\\
     &= e^{-2a\sqrt{M} \delta t - \Or(N\delta^{2}t^{2})}, 
 \end{aligned}
 \end{equation}
 for $a\sqrt{M}\delta t=\Or(1)$. Using $M=\Or(N)$, we prove the inequality in the statement of the theorem.
\end{proof}

\section{Conclusion}
\label{sec:conclusion}
In this work, we considered the observable error bounds for analog quantum simulation under random coherent noise coming from independent sources. 
We showed that such randomness leads to improved scaling in error bounds due to stochastic error cancellation. 
We studied general observables without locality constraints as well as local observables, finding in both cases that average-case error bounds scale more favorably than worst-case error bounds.
Such cancellation indicates a higher tolerance of noise for simulation tasks on near-term analog quantum simulators than suggested by the worst-case bound.

Although our result substantially improves the previous state-of-the-art error bounds, there are still many factors that are not taken into consideration in our analysis. For example, in many-body localized systems \cite{PhysRev.109.1492,PhysRevB.21.2366,PhysRevLett.95.206603,BASKO20061126,https://doi.org/10.1002/andp.201700169,doi:10.1146/annurev-conmatphys-031214-014726,RevModPhys.91.021001}, our error bound based on the Lieb-Robinson light cone will not be able to capture the slow propagation of information, thus leading to an over-estimation of the error. In general, a tight analysis of the error would require understanding how operators spread in the system, which is a highly non-trivial and system-specific problem \cite{chen2021operator, parker2019universal, schuster2022operator, schuster2022many}. Phenomena such as thermalization should also play an important role, because if a subsystem thermalizes then the error on local operators in the subsystem should no longer accumulate over time. 
Symmetry has also been shown to be helpful in reducing error in both analog and digital quantum simulations \cite{tran2021symmetry,rotello2022symmetry}, and so has randomness in the simulation algorithm and the initial state \cite{chen2021concentration,childs2019faster,an2021time}. Our results for geometrically local Hamiltonians should be generalizable to the situation with power-law decaying interactions \cite{tran2019locality, chen2019quantum, gong2014locality, luitz2019locality, tran2021lieb, tran2020hierarchy, chen2019finite}, where the Lieb-Robinson bound is still available when the decay is fast enough.
These observations indicate that we may still be able to obtain more accurate characterizations of error accumulation in practical analog simulators.

In this work we focused on quantum systems consisting of qubits or qudits, but many realistic quantum systems involve infinitely many local degrees of freedom and unbounded operators in the Hamiltonian, which makes analysis more difficult \cite{tong2022provably,kuwahara2022optimal}. We hope to tackle this problem in future works.

Furthermore, we note that an approximate ground-state projection operator can be written as a linear combination of time evolution operators (a fact which is instrumental in the proof of the exponential clustering theorem and 1D area law \cite{hastings2006spectral,hastings2007area,arad2013area}) and that approximate ground-state projectors may be used in algorithms for preparing the ground state \cite{ge2019faster,lin2020near,lin2022heisenberg}. We therefore expect our results to be useful for analyzing how errors in the Hamiltonian affect expectation values of observables in the ground state. We also hope to extend our result to thermal states using the techniques employed in \cite{trivedi2022quantum}.



\bibliography{bib2doi}

\appendix
\newpage

\section{Separation of oscillation and growth in the observable error}
\label{app:oscillation}

\begin{figure}
    \centering
    \includegraphics[width=0.473\textwidth]{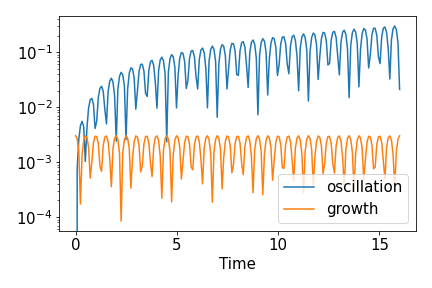}
    \includegraphics[width=0.456\textwidth]{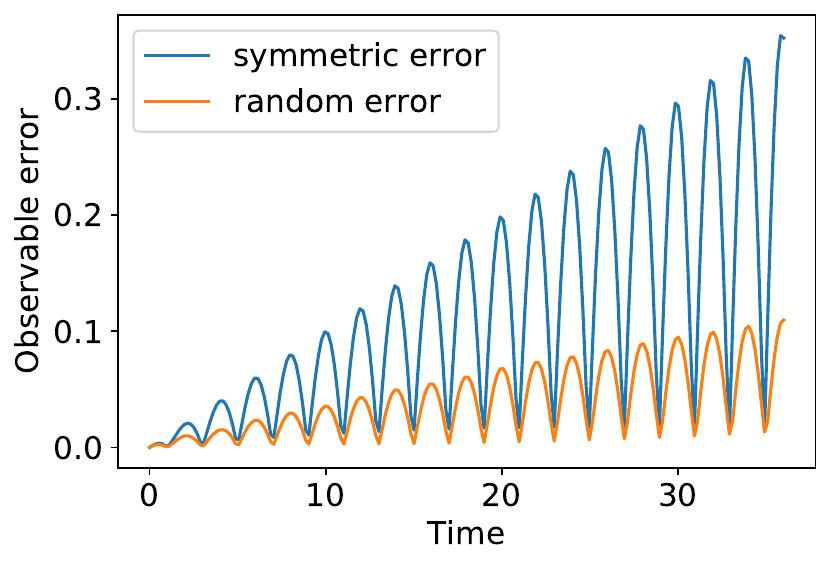}
    \caption{(Left) Comparing the oscillation part $|\braket{F(t)}|$ and the growth part $|\braket{G(t)}|$ in Eq.~\eqref{eq:decomp_oscillation_growth} that describes the evolution of the error operator. The setup is the same as in Figure~\ref{fig:error_cancellation_demo} except that the system now contains $8$ qubits and the observable is $(1/N)\sum_{i=1}^N Y_i$. (Right) Error in the observable expectation value for symmetric and random local errors. The simulation is performed with the same parameter setup as in Fig.~\ref{fig:error_cancellation_demo}, except with $h=(0.5)\pi$.}
    \label{fig:compare_oscillation_growth}
\end{figure}

In Fig.~\ref{fig:error_cancellation_demo} we observed that the error displays rapid oscillation in time. In this appendix we will investigate the cause of it.

We will examine how the operator $O$, the operator whose expectation value we want to estimate at the end of the evolution, evolves differently under the target Hamiltonian $H$ and the actual Hamiltonian $H'$. Using the notation introduced in Eq.~\eqref{eq:defn_time_evolved_operators_and_states}, we denote by $O(t)$ the time-evolved operator $O$ at time $t$ in the Heisenberg picture under the target Hamiltonian $H$, and by $O'(t)$ the corresponding operator under the actual Hamiltonian $H'$. We can write down an equation governing the error $O'(t)-O(t)$, from taking the time derivative in Eq.~\eqref{eq:defn_time_evolved_operators_and_states}:
\begin{equation}
\label{eq:decomp_oscillation_growth}
    \frac{\dd}{\dd t}(O'(t)-O(t)) = \underbrace{i[H,O'(t)-O(t)]}_{F(t)} + \underbrace{i\sum_{i=1}^M g_i[V_i,O'(t)]}_{G(t)}.
\end{equation}
We will show that only the second part $G(t)$ contributes to the growth of the error. Writing down the solution to the above differential equation using Duhammel's principle, for $0<s<t$ we have
\begin{equation}
    O'(t) - O(t) = e^{iH(t-s)}(O'(s)-O(s))e^{-iH(t-s)} + \int_{s}^t e^{iH(t-u)}G(u)e^{-iH(t-u)}\dd u.
\end{equation}
We observe that if $G(u)=0$ for $s<u<t$, then we would have $\|O'(t) - O(t)\|=\|O'(s)-O(s)\|$, and the error would not grow in magnitude. This shows that $G(t)$ is solely responsible for the growth of the error. The first term on the right-hand side of \eqref{eq:decomp_oscillation_growth} only rotates $O'(t) - O(t)$.

While $F(t)$ does not contribute to the growth of the error, it nevertheless plays a part in how the derivative changes, as can be seen from \eqref{eq:decomp_oscillation_growth}, which tells us that $\frac{\dd}{\dd t}\braket{O'(t)-O(t)} = \braket{F(t)}+\braket{G(t)}$. If $|\braket{F(t)}|\gg|\braket{G(t)}|$, then the error $\braket{O'(t)-O(t)}$ will be changing at a rate much faster than its growth, which indicates an oscillatory behavior. We numerically found that this is indeed the case. In Fig.~\ref{fig:compare_oscillation_growth}, we compare the magnitude of the oscillation part $|\braket{F(t)}|$ and the growth part $|\braket{G(t)}|$. We can see from the figure that $|\braket{F(t)}|\gg |\braket{G(t)}|$, which explains the rapid oscillation we see in Fig.~\ref{fig:error_cancellation_demo}. In particular, in the parameter setup of Fig.~\ref{fig:error_cancellation_demo}, we applied a large $X$-field whose strength is ten times the coupling constants. This $X$-field only contributes to $F(t)$ but not $G(t)$, which resulted in $|\braket{F(t)}|\gg |\braket{G(t)}|$. When we decrease the $X$-field strength the oscillation frequency decreases accordingly, as can be seen from the right panel of Fig.~\ref{fig:compare_oscillation_growth}.

\end{document}